\documentclass{article}
\pdfoutput=1
\usepackage{amsmath}
\usepackage{amsfonts}
\usepackage{amsthm}
\usepackage{kantlipsum}
\usepackage{amssymb}
\usepackage{graphicx}
\usepackage[colorlinks,citecolor=blue,urlcolor=blue]{hyperref}
\usepackage{multirow}
\usepackage{multicol}
\usepackage{lipsum}
\usepackage{array}
\usepackage{float}
\usepackage{booktabs}
\usepackage{indentfirst}
\usepackage{chngcntr}
\usepackage[mathscr]{euscript}
\usepackage{mathrsfs}
\usepackage{color}
\usepackage{hyperref}
\usepackage{marginnote}
\usepackage{pdfpages}

\newtheorem{thm}{Theorem}[section]

\newtheorem{lem}[thm]{Lemma}
\newtheorem{prop}[thm]{Proposition}
\theoremstyle{definition}

\theoremstyle{remark}

\newtheorem*{acc*}{}
\theoremstyle{plain}

% if you need to pass options to natbib, use, e.g.:
% \PassOptionsToPackage{numbers, compress}{natbib}
% before loading nips_2016
%
% to avoid loading the natbib package, add option nonatbib:
\usepackage[nonatbib,final]{nips_2016}
%\usepackage{nips_2016}

% to compile a camera-ready version, add the [final] option, e.g.:
%\usepackage[final]{nips_2016}

\usepackage[utf8]{inputenc} % allow utf-8 input
\usepackage[T1]{fontenc}    % use 8-bit T1 fonts
\usepackage{hyperref}       % hyperlinks
\usepackage{url}            % simple URL typesetting
\usepackage{booktabs}       % professional-quality tables
\usepackage{amsfonts}       % blackboard math symbols
\usepackage{nicefrac}       % compact symbols for 1/2, etc.
\usepackage{microtype}      % microtypography

\title{Estimation of Vertex Degrees in a Sampled Network}

% The \author macro works with any number of authors. There are two
% commands used to separate the names and addresses of multiple
% authors: \And and \AND.
%
% Using \And between authors leaves it to LaTeX to determine where to
% break the lines. Using \AND forces a line break at that point. So,
% if LaTeX puts 3 of 4 authors names on the first line, and the last
% on the second line, try using \AND instead of \And before the third
% author name.

\author{
  Apratim Ganguly\thanks{Currently at Natera Inc.} \\
  Department of Mathematics and Statistics\\
  Boston University\\
  Boston, MA 02212 \\
  \texttt{apratimganguly@gmail.com} \\
  %% examples of more authors
  \And
  Eric Kolaczyk \\
  Department of Mathematics and Statistics\\
  Boston University \\
  Boston, MA 02212\\
  \texttt{kolaczyk@bu.edu} \\
  %% \AND
  %% Coauthor \\
  %% Affiliation \\
  %% Address \\
  %% \texttt{email} \\
  %% \And
  %% Coauthor \\
  %% Affiliation \\
  %% Address \\
  %% \texttt{email} \\
  %% \And
  %% Coauthor \\
  %% Affiliation \\
  %% Address \\
  %% \texttt{email} \\
}
\begin{document}
% \nipsfinalcopy is no longer used

\maketitle
\begin{abstract}
 The need to produce accurate estimates of vertex degree in a large network, based on observation of a subnetwork, arises in a number of practical settings. We study a formalized version of this problem, wherein the goal is, given a randomly sampled subnetwork from a large parent network, to estimate the actual degree of the sampled nodes. Depending on the sampling scheme, trivial method of moments estimators (MMEs) can be used. However, the MME is not expected, in general, to use all relevant network information. In this study, we propose a handful of novel estimators derived from a risk-theoretic perspective, which make more sophisticated use of the information in the sampled network. Theoretical assessment of the new estimators characterizes under what conditions they can offer improvement over the MME, while numerical comparisons show that when such improvement obtains, it can be substantial.  Illustration is provided on a human trafficking network.
\end{abstract}

\section{Introduction}\label{sec:introduction}
Frequently it is the case in the study of real-world complex networks that we observe essentially a sample from a larger network.  There are many reasons why sampling in networks is often unavoidable -- and, in some cases, even desirable.   Sampling, for example, has long been a necessary part of studying Internet topology~\cite{crovella2006internet}.  Similarly, its role has been long-recognized in the context of biological networks, e.g., protein-protein interaction \cite{Jeong2000,Maslov2002,Qin2003}, gene regulation \cite{Noort2004} and metabolic networks \cite{Jeong2000}.  Finally, in recent years, there has been intense interest in the use of sampling for monitoring online social media networks.  See~\cite{Zhang2015}, for example, for a representative list of articles in this latter domain.  Given a sample from a network, a fundamental statistical question is how the sampled network statistics be used to make inferences about the parameters of the underlying global network. Parameters of interest in the literature include (but are by no means limited to) degree distribution, density, diameter, clustering coefficient, and number of connected components. For seminal work in this direction, see~\cite{Frank1980,Frank1981}.

In this paper, we propose potential solutions to an estimation problem that appears to have received significantly less attention in the literature to date -- the estimation of the degrees of individual sampled nodes. Degree is one of the most fundamental of network metrics, and is a basic notion of node-centrality.  Deriving a good estimate of the node degree, in turn, can be helpful in estimating other global parameters, as many such parameters can be viewed as functions that include degree as an argument.  While a number of methods are available to estimate the full degree distribution under network sampling (e.g., \cite{Stumpf2005,Zhang2015}), little work appears to have been done on estimating the individual node degrees. Our work addresses this gap.  Formally, our interest lies in estimation of the degree of a vertex, provided that vertex is selected in a sample of the underlying graph. 

There are many sampling designs for graphs.  See~\cite[Ch 5]{Kolaczyk2009} for a review of the classical literature, and~\cite{ahmed2014network} for a recent survey.  Canonical examples include ego-centric sampling\cite{Handcock2010}, snowball sampling, induced/incident subgraph sampling, link-tracing and random walk based methods\cite{Leskovec2006,Ribeiro2010}. Under certain sampling designs where one observes the true degree of the sampled node (e.g. ego-centric and one-wave snowball sampling), degree estimation is unnecessary. In this paper, we focus on \emph{induced subgraph sampling}, which is structurally representative of a number of other sampling strategies\cite{Zhang2015}.  Formally, in induced subgraph sampling, a set of nodes is selected according to independent Bernoulli($p$) trials at each node. Then, the subgraph induced by the selected nodes, i.e., the graph generated by selecting edges between selected nodes, is observed.  This method of sampling shares stochastic properties with incident subgraph sampling (wherein the role of nodes and edges is reversed) and with certain types of random walk sampling~\cite{Ribeiro2010}.

The problem of estimating degrees of sampled nodes has been given a formal statistical treatment in~\cite{Zhang2005}, for the specific case of traceroute sampling as a special case of the so-called \emph{species problem} \cite{Bunge1993}. To the best of our knowledge, a similarly formal  treatment has not been applied more generally for other, more canonical sampling strategies.  However, a similar problem would be estimating personal network size for a group of people in a survey. Some prior works in this direction \cite{Killworth1998,McCormick2010} consider estimators obtained by scaling up the observed degree in the sampled network, in the spirit of what we term a method of moments estimator below. But no specific graph sampling designs are discussed in these studies. We focus on formulating the problem using the induced subgraph sampling design and exploit network information beyond sampled degree to propose estimators that are better than naive scale-up estimators.  Key to our formulation is a risk theoretic framework used to derive our estimators of the node degrees, through minimizing frequentist or Bayes risks. This contribution is accompanied by a comparative analysis of our proposed estimators and naive scale-up estimators, both theoretical and empirical, in several network regimes.

We note that when sampling is coupled with false positive and false negative edges, e.g., in certain biological networks, our methods are not immediately applicable. Sampling designs that result in the selection of a fraction of edges from the underlying global network (induced and incident subgraph sampling, random walks etc.) are our primary objects of study. We use induced subgraph sampling as a rudimentary but representative model for this class and aim to simultaneously estimate the true degrees of all the observed nodes with a precision better than that obtained by trivial scale-up estimators with no network information used.

\section{Degree Estimation Methods}
Let us denote by $G^0 = \left(V^0, E^0\right)$ a true underlying network, where $V^0 = \{1, \cdots, N\}$.  This network is assumed static and, without loss of generality, undirected.  The true degree vector is ${\bf d^0} = (d^0_1, \cdots, d^0_N)^T$. The sampled network is denoted by $G^* = \left(V^*, E^*\right)$ where, again without loss of generality, we assume that $V^* =\{1,\cdots,n\}$.  Write the sampled degree vector as ${\bf d^*} = \left(d^*_1, \cdots, d^*_n\right)$. Throughout the paper, we assume that we have an induced subgraph sample, with (known) sampling proportion $p$. 

It is easy to see from the sampling scheme that $d^*_i \sim B(d^0_i, p)$. Therefore, the method of moments estimator (MME) for $d^0_i$ is $\hat{d}^{\rm MME}_i = \frac{d^*_i}{p}$. Thus, $\hat{\bf d}_{\rm MME} = \left(\hat{d}^{\rm MME}_1,\cdots,\hat{d}^{\rm MME}_n\right)^T$ is a natural scale-up estimator of the degree sequence of the sampled nodes. In this section, we propose a class of estimators that minimize the unweighted $\ell_2$-risk of the sampled degree vector and discuss their theoretical properties. We aim to demonstrate, under several conditions, that the risk minimizers are superior to the regular scale-up estimators, the former taking into account the inherent relationships inside the network.

We note that although a maximum likelihood approach to estimation is perhaps intuitively appealing, a closed form derivation of the MLE in this setting is probitive. Another option is to look at marginal likelihoods. But the MLE based on univariate marginal likelihoods are essentially equivalent to the MME for this sampling scheme. We will frequently use the the first and second moments of the sampled degree vector in our estimation methods. The following lemma will be useful.

\begin{lem}\label{lem:meancovariancedegree}
Under induced subgraph sampling, the mean and covariance matrix of the observed degree vector are
\begin{align}
{\rm E}\left({\bf d}^*\right) &= p {\bf d^0} \\
{\rm Var}\left({\bf d}^*\right) &= p(1-p) {\mathscr D^0}
\end{align}
where the diagonals of ${\mathscr D^0}$ are $d^0_1, \cdots, d^0_n$ and the $(i,j)$-th off-diagonal is denoted by $d^0_{ij}$, which denotes the number of common neighbors of node $i$ and node $j$ in the network $G^0$.
\end{lem}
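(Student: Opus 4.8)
The plan is to introduce the node-selection indicators explicitly and reduce both moments to elementary computations with independent Bernoulli variables. Let $Z_1,\dots,Z_N$ denote the sampling indicators, so that $Z_k \sim \mathrm{Bernoulli}(p)$ independently across $k \in V^0$, and node $k$ belongs to the sample precisely when $Z_k = 1$. For a node $i$ that has been sampled, its observed degree counts exactly those neighbors in $G^0$ that were also selected, so I would write
\begin{equation}
d_i^* = \sum_{k:(i,k)\in E^0} Z_k,
\end{equation}
with all statements below understood conditionally on the relevant endpoints lying in the sample, i.e.\ conditionally on $Z_i = 1$ (and also $Z_j = 1$ for the covariance). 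Because $i$ is not its own neighbor, $Z_i$ does not appear in this sum, so conditioning on $Z_i = 1$ leaves the distribution of $d_i^*$ governed by the independent indicators $\{Z_k\}_{k \sim i}$.

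For the mean, linearity of expectation together with $\mathrm{E}(Z_k) = p$ immediately gives $\mathrm{E}(d_i^* \mid Z_i = 1) = \sum_{k \sim i} p = p\,d_i^0$, which assembles into ${\rm E}({\bf d}^*) = p{\bf d^0}$. For the diagonal of the covariance, I would note that, conditional on $Z_i=1$, the quantity $d_i^*$ is a sum of $d_i^0$ independent $\mathrm{Bernoulli}(p)$ variables, hence $d_i^* \sim B(d_i^0,p)$ as already observed, so $\mathrm{Var}(d_i^* \mid Z_i=1) = p(1-p)\,d_i^0$, matching the diagonal entry $d_i^0$ of ${\mathscr D^0}$.

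The off-diagonal entries are the crux. For $i \neq j$, I would expand
\begin{equation}
\mathrm{Cov}(d_i^*, d_j^*) = \sum_{k \sim i}\sum_{l \sim j} \mathrm{Cov}(Z_k, Z_l)
\end{equation}
by bilinearity, and then use independence of the $Z$'s, which forces $\mathrm{Cov}(Z_k, Z_l) = p(1-p)\,\mathbf{1}\{k=l\}$. Only terms with $k = l$, i.e.\ nodes $k$ adjacent to both $i$ and $j$, survive, so the sum collapses to $p(1-p)$ times the number of common neighbors of $i$ and $j$, namely $p(1-p)\,d_{ij}^0$.

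The one point that needs care, and the main (if modest) obstacle, is the bookkeeping around the conditioning and the possibility that $i$ and $j$ are themselves adjacent. Conditioning on $Z_i = Z_j = 1$ means that if $(i,j)\in E^0$ then the term $Z_j$ appearing in $d_i^*$ and the term $Z_i$ appearing in $d_j^*$ are held fixed at $1$; being constants, they contribute nothing to the covariance and may be dropped from the double sum without affecting the result. Since a common neighbor $k$ of $i$ and $j$ automatically satisfies $k \neq i$ and $k \neq j$ in a simple graph, no common-neighbor term is lost in this reduction, and the off-diagonal entry is exactly $p(1-p)\,d_{ij}^0$, completing the identification of $\mathrm{Var}({\bf d}^*) = p(1-p){\mathscr D^0}$.
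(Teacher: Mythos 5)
Your proof is correct and proceeds exactly as the paper's own argument: writing $d_i^* = \sum_{k \sim i} Z_k$ in terms of independent Bernoulli$(p)$ selection indicators, then using linearity for the mean, the binomial variance for the diagonal, and bilinearity plus independence (so only $k=l$ terms survive) to identify the off-diagonal entries with $p(1-p)\,d^0_{ij}$. Your explicit handling of the conditioning on $Z_i = Z_j = 1$ when $i$ and $j$ are adjacent is a careful touch that matches the intended reading of the lemma.
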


\subsection{Frequentist Risk Minimization}
Adopting the standard definition of (unweighted) frequentist $\ell_2$ risk of an estimator $\hat{\theta}$ of a parameter $\theta_0$, i.e., ${\cal R}(\hat{\theta},\theta_0) = \mathbb{E}||\hat{\theta} - \theta_0||^2$, the frequentist risks are calculated for a general class of estimators. We also define ${\cal R}_{\cal A} (\hat{\theta},\theta_0) :=  \mathbb{E}\left(||\hat{\theta} - \theta_0||^2 {\bf 1}(G^* \in {\cal A})\right)$, a \emph{restricted risk function} assuming the sampled graph $G^*$ is restricted to some class ${\cal A}$. Our proposed candidates are the elements in the class of linear functions of the observed degree vector that minimize the risk or the restricted risk w.r.t. some class. It is expected that the optimal estimator will be a function of the parameter and hence another (naive) estimator will need to be plugged in.  Our final estimate will then be a plug-in risk minimizer.

\subsubsection{Univariate Risk Minimization}\label{sec:uniriskmin}

Here we estimate the node degrees individually, assuming that the estimate for the $i^{\rm th}$ node is of the form $\hat{d}_i = c_i d^*_i$, where $c_i$ is a scalar and $d^*_i$ is the observed degree in the sample. Since $d^*_i \sim B(d^0_i, p)$, where $d^0_i$ is the true degree of the $i^{\rm th}$ node,
$$
{\cal R}(\hat{d}_i, d^0_i)  = {\rm Bias}^2(c_i d^*_i) + {\rm Var}(c_i d^*_i)
			     = (c_i p d^0_i - d^0_i)^2 + p(1-p)c_i^2 d^0_i \enskip .$$
Differentiating w.r.t. $c_i$ and equating to 0, we get the optimal $c^*_i = \frac{d^0_i}{p d^0_i + 1 - p} $. 
Plugging in the MME of $d^0$, we get the plug-in univariate risk minimizer
		$ \hat{d}_{i,{\rm u, P}} = \frac{d^{*^2}_i}{p(d^*_i + 1 - p)}.$ 

Taylor expanding the above formula (during Taylor expansions of functions of $d^*_i$, we will assume that $d^*_i$ is concentrated around its mean, so that the Taylor expanded approximation is close) and taking expectation, we see that

$\mathbb{E}\left(\hat{d}_{i,{\rm u, P}}\right) = \mathbb{E}\left[\frac{d^{*^2}_i}{p(d^*_i + 1 - p)}\right] 
= \frac{1}{p} \mathbb{E}\left[ d^*_i\left(1 + \frac{1-p}{d^*_i}\right)^{-1} \right]
\approx  \frac{1}{p} \mathbb{E}\left[ d^*_i\left(1 - \frac{1-p}{d^*_i}\right) \right] = d^0_i - \frac{1-p}{p} \enskip . $ \\

The above calculation suggests that an adjustment needs to be made to $\hat{d}_{i,{\rm u, P}}$ by bias-correction, so that its risk becomes comparable to that of $\hat{d}^{\rm MME}_i$. In fact, we will show in Proposition \ref{unirisk} that our bias-corrected plug-in estimator has a lower risk than MME when the true degree is bigger than a lower bound, which can be expressed as a closed form function of the sampling proportion. Ultimately, our proposed univariate risk minimizer is given by 
\begin{align}
\hat{d}_{i,{\rm u}} = \frac{d^{*^2}_i}{p(d^*_i + 1 - p)} + \frac{1-p}{p} \label{uniriskminest}
\end{align}

\subsubsection{Multivariate Risk Minimization}

We extend the idea presented in the previous section to the multivariate case, in order to minimize the overall $\ell_2$ sum over all sampled nodes. The rationale for this extension is to exploit the covariance structure we derived in Lemma \ref{lem:meancovariancedegree} in estimating the degree vector. Accordingly, we consider all estimates of the form ${\bf \hat{d}} = A{\bf d^*}$, where $A$ is an $n\times n$ matrix. Using Lemma \ref{lem:meancovariancedegree}, we get the $\ell_2$ risk \\
$R({\bf \hat{d}}, {\bf d^0}) = (pA - I) {\bf d^0} {\bf d^{0^T}} (pA - I)^T + p(1-p) A\mathscr{D}^0 A^T  \\ \hspace*{0.52in} = A\left( p^2 {\bf d^0} {\bf d^{0^T}} A^T + p(1-p)\mathscr{D}^0 \right)A^T - p\left({\bf d^0} {\bf d^{0^T}} A^T + A {\bf d^0} {\bf d^{0^T}}\right) + {\rm constant}$ . \\
The multivariate risk minimizer is defined as \\
\centerline{	$A^* = {\rm argmin}_A \sum^n_{i=1} \mathbb{E}\left(\hat{d}_i - d^0_i\right)^2 = {\rm argmin}_A {\rm tr}\left(R(\hat{d}, d^0)\right)$ .}\\
%\intertext{We use the following results from matrix calculus}
%	&\frac{\partial}{\partial X} {\rm tr}(XBX^T) = XB^T + XB \\
%	&\frac{\partial}{\partial X} {\rm tr}(BX^T) = B \\
%	&\frac{\partial}{\partial X} {\rm tr}(XB) = B^T 
Differentiating the objective function w.r.t. $A$ and equating it to $0$, we get \\
\centerline{$A^* = p {\bf d^0} {\bf d^{0^T}} \left(p^2 {\bf d^0} {\bf d^{0^T}} + p(1-p)\mathscr{D}^0\right)^{-1} \enskip .$} \\
Plugging in the MME of ${\bf d^0}$ and $\mathscr{D}^0$, we get the plug-in multivariate risk minimizer
\begin{align}
{\bf \hat{d}}_{\rm m} = \frac{1}{p} {\bf d^*} {\bf d^{*^T}} \left( {\bf d^*} {\bf d^{*^T}} + \mathscr{D}^*\right)^{-1} {\bf d^*} \enskip ,
\end{align}
where $d^*_{ij}$ denotes the number of common neighbors of node $i$ and node $j$ in the sample, and $\mathscr{D}^*$ is given by a matrix whose diagonals are $d^*_i$ and whose off-diagonals are $d^*_{ij}$, $i,j \in \{1, \cdots, n\}, \, i\ne j$.

\subsection{Bayes Risk Minimization}
In this section, we propose a Bayesian solution to our estimation problem, by putting a prior on the degree distribution. The principal motivation behind this approach is the desire to incorporate additional information on global network structure, where the natural candidate in this context is the degree distribution.  In case such a subjective prior is not available, an estimate of the degree distribution may be used. We propose and analyze estimators based on both known (subjective) and estimated degree distributions below.
%\subsubsection{Known Prior} 

First, let us assume that we know the degree distribution $\pi(\cdot)$ of the underlying network. Under the assumption that the true degree of node $i$ follows $\pi(\cdot)$, and under induced subgraph sampling of $G$, the conditional distribution of $d^*_i | d_i$ is $B(d_i,p)$. Then it can be easily shown that the Bayes estimator under square error loss is 
\begin{align}\label{eq:Bayes}
\hat{d}^{B}_i &= \frac{\sum_{d_i \geq d^*_i} d_i \binom{d_i}{d^*_i} (1-p)^{d_i} \pi(d_i)}{\sum_{d_i \geq d^*_i} \binom{d_i}{d^*_i} (1-p)^{d_i} \pi(d_i)} \enskip .
\end{align}

If the true degree distribution is not known, then it needs to be estimated, for example using techniques described in or similar to \cite{Zhang2015}. Let $\hat{\pi}(\cdot)$ be a ``reasonable" estimator for $\pi(\cdot)$. Then an empirical Bayes estimator is given by
\begin{align}\label{eq:EBayes}
\hat{d}^{EB}_i &= \frac{\sum_{d_i \geq d^*_i} d_i \binom{d_i}{d^*_i} (1-p)^{d_i} \hat{\pi}(d_i)}{\sum_{d_i \geq d^*_i} \binom{d_i}{d^*_i} (1-p)^{d_i} \hat{\pi}(d_i)} \enskip .
\end{align}
Generally speaking, if $\xi(d^*_i; d_i)$ denotes the distribution of $d^*_i$ given $d_i$, then this empirical Bayes estimate can be expressed as 
$$
\hat{d}^{EB}_i = \frac{\sum_{d_i \geq d^*_i} d_i \xi(d^*_i; d_i) \hat{\pi}(d_i)}{\sum_{d_i \geq d^*_i} \xi(d^*_i; d_i) \hat{\pi}(d_i)} \enskip .
$$

These estimators take the form of a weighted mean, as expected for Bayes estimates under quadratic loss. The weights are functionals of both sampling design and the degree distribution.  For the latter estimator, only the estimated degree distribution comes into play, and thus the proposed empirical Bayes estimator incorporates the sampling and sampled network information.

\section{Risk Analysis}
In this section, we present results on the relative performance of our proposed estimators from a risk-theoretic perspective, and we discuss several conditions under which one outperforms the other. All these estimates will be benchmarked against the regular scale-up estimate $\hat{\bf d}_{\rm MME}$.  Proofs may be found in the supplementary materials.

\subsection{Risk of Frequentist Estimates}
In the first part of our risk analysis, we look at the $\ell_2$ frequentist risk of our proposed univariate and multivariate estimators. Our main results in this section will compare the risk incurred by our proposed estimators to the scale up estimator and discuss conditions under which our proposed estimators perform better.

\begin{prop}\label{unirisk}
Assuming $d^0_i > \frac{1-p}{p}$, we have ${\cal R}\left(\hat{d}_{i,u},d^0_i\right) < {\cal R}\left(\hat{d}^{\rm MME}_i,d^0_i\right)$.
\end{prop}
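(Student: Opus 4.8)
The plan is to exploit the fact that our estimator differs from the MME by an explicit positive correction, and to show that this correction reduces risk exactly when $d^0_i$ is large enough. First I would record the exact algebraic identity
$$
\hat{d}_{i,{\rm u}} - \hat{d}^{\rm MME}_i = \frac{(1-p)^2}{p(d^*_i + 1 - p)} =: \Delta_i ,
$$
obtained by combining the two terms in \eqref{uniriskminest} over a common denominator; note that $\Delta_i > 0$ and that $\Delta_i$ is a strictly decreasing function of $d^*_i$. Writing $M_i := \hat{d}^{\rm MME}_i = d^*_i/p$, which is unbiased with ${\cal R}(M_i, d^0_i) = {\rm Var}(M_i) = \frac{(1-p)d^0_i}{p}$ by Lemma \ref{lem:meancovariancedegree}, I would then expand
$$
{\cal R}(\hat{d}_{i,{\rm u}}, d^0_i) - {\cal R}(\hat{d}^{\rm MME}_i, d^0_i) = \mathbb{E}\left[(M_i + \Delta_i - d^0_i)^2 - (M_i - d^0_i)^2\right] = 2\,\mathbb{E}\left[\Delta_i (M_i - d^0_i)\right] + \mathbb{E}[\Delta_i^2] .
$$
The entire comparison thus reduces to showing that the (negative) cross term dominates the (positive) quadratic term.

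For the cross term, since $\mathbb{E}[M_i - d^0_i] = 0$ we have $\mathbb{E}[\Delta_i(M_i - d^0_i)] = {\rm Cov}(\Delta_i, M_i)$. Because $\Delta_i$ is decreasing and $M_i$ is increasing in the same variable $d^*_i$, a Chebyshev correlation (association) inequality gives ${\rm Cov}(\Delta_i, M_i) \le 0$ outright, which already shows that the correction pushes risk in the right direction. To obtain the stated threshold I would quantify both terms using the Taylor/concentration device adopted earlier in the section: with $\mu = p d^0_i$ and ${\rm Var}(d^*_i) = p(1-p)d^0_i$, the delta method yields ${\rm Cov}(\Delta_i, M_i) \approx \Delta_i'(\mu)\,{\rm Var}(d^*_i)/p = -\frac{(1-p)^3 d^0_i}{p(pd^0_i + 1 - p)^2}$ and $\mathbb{E}[\Delta_i^2] \approx \Delta_i(\mu)^2 = \frac{(1-p)^4}{p^2(pd^0_i+1-p)^2}$.

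Substituting these into the displayed decomposition gives
$$
{\cal R}(\hat{d}_{i,{\rm u}}, d^0_i) - {\cal R}(\hat{d}^{\rm MME}_i, d^0_i) \approx \frac{(1-p)^3}{p\,(pd^0_i + 1 - p)^2}\left(\frac{1-p}{p} - 2 d^0_i\right) ,
$$
which is negative once $d^0_i$ exceeds a threshold comparable to $\frac{1-p}{p}$; in particular the hypothesis $d^0_i > \frac{1-p}{p}$ secures negativity with room to spare, the extra margin absorbing the neglected remainder terms.

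The main obstacle is the rigor of the Taylor/delta-method step: the correction $\Delta_i$ (and the plug-in estimator itself) involves $1/(d^*_i + 1 - p)$, so the expansions implicitly require $d^*_i$ to concentrate away from $0$, which is the ``$d^*_i$ concentrated around its mean'' assumption flagged in the text and is reasonable precisely in the regime $d^0_i > (1-p)/p$ where $\mu = p d^0_i$ is bounded away from $0$. The secondary, bookkeeping challenge is confirming that the retained leading contributions — an $O(1/d^0_i)$ negative term from the covariance against an $O(1/(d^0_i)^2)$ positive term from $\mathbb{E}[\Delta_i^2]$ — genuinely control the omitted higher-order remainders, so that the sign of the risk difference is determined by the displayed expression and the stated condition is indeed sufficient.
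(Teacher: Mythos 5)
Your proposal is correct and follows essentially the same route as the paper's own argument: both hinge on the exact identity $\hat{d}_{i,{\rm u}} - \hat{d}^{\rm MME}_i = \frac{(1-p)^2}{p(d^*_i + 1 - p)} > 0$ and then compare risks by Taylor/delta-method expansion of the resulting cross and quadratic terms around $\mathbb{E}(d^*_i) = p\,d^0_i$, invoking the same ``$d^*_i$ concentrated around its mean'' caveat that the paper itself adopts, with the hypothesis $d^0_i > \frac{1-p}{p}$ supplying the factor-two margin over the leading-order threshold $\frac{1-p}{2p}$ that absorbs the neglected remainders. Your Chebyshev association inequality giving ${\rm Cov}(\Delta_i, M_i) \le 0$ outright is a small rigorous embellishment beyond what the paper does, but it does not change the substance of the comparison.
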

In other words, the univariate risk minimizer $\hat d_{i,u}$ will outperform the MME when the true degree $d^0_i$ is 
sufficiently large.

\begin{prop}\label{multirisk}
Let us denote the class of all sampled graphs of size $n$ (where $d^*_i \geq 1$ for all $i$, i.e., there is no isolated node) as $\mathscr{G}^*_n $. Also assume that there exists an $0< \alpha_0 \leq 1 $ such that 
\begin{align*}
\mathscr{G}^*_{1,n} &= \left\{ {\cal G} \in \mathscr{G}^*_n : \text{ Normalized eigenvectors } {\bf v}_1, {\bf v}_2, \cdots, {\bf v}_n \text{ of }  \left( {\bf d^*} {\bf d^{*^T}} + \mathscr{D}^*\right) \text{ satisfy } \right. \\ 
&\qquad \qquad \qquad \qquad\qquad \qquad \qquad \qquad\left. {\bf 1^T}{\bf v}_i \geq \sqrt{n}\alpha_0 \quad \forall i \right\}\\
\mathscr{G}^*_{2,n} &= \left\{{\cal G} \in \mathscr{G}^*_n : \frac{n^3 \alpha^2_0}{|E({\cal G})|\left(\frac{2|E({\cal G})|}{n-1} + n\right)} \geq 1 -  \frac{(1-p)\lambda_{\rm min}(\mathscr{D})}{||{\bf d}^0||^2}\right\}
\end{align*} 
are nonempty. Then we have ${\cal R}_{\mathscr{G}^*_{1\cap2, n}}\left({\bf \hat{d}}_{\rm m},{\bf d}^0\right) \leq {\cal R}_{\mathscr{G}^*_{1\cap2, n}}\left(\hat{\bf d}^{\rm MME},{\bf d}^0\right)$ over sampled graphs belonging to $\mathscr{G}^*_{1\cap2, n} = \mathscr{G}^*_{1,n} \bigcap \mathscr{G}^*_{2,n}$.
\end{prop}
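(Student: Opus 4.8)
The plan is to benchmark everything against the MME, whose restricted risk is transparent: since $\hat{\bf d}^{\rm MME}=\tfrac1p{\bf d}^*$ is unbiased for ${\bf d}^0$, Lemma~\ref{lem:meancovariancedegree} gives ${\cal R}(\hat{\bf d}^{\rm MME},{\bf d}^0)=\tfrac{1-p}{p}\,\mathrm{tr}(\mathscr{D}^0)=\tfrac{1-p}{p}\sum_i d^0_i$, and the restricted version simply carries the indicator ${\bf 1}(G^*\in\mathscr{G}^*_{1\cap2,n})$ inside the expectation. The first real step is to put $\hat{\bf d}_{\rm m}$ into a tractable form. Writing $M={\bf d}^*{\bf d}^{*^T}+\mathscr{D}^*$ and applying the Sherman--Morrison identity to this rank-one-plus-diagonal matrix, I would show that $\hat{\bf d}_{\rm m}=\tfrac1p\big({\bf d}^{*^T}M^{-1}{\bf d}^*\big){\bf d}^*$ is in fact a scalar shrinkage of the MME; equivalently, I would diagonalize $M=\sum_i\lambda_i{\bf v}_i{\bf v}_i^{T}$ so that the quadratic forms governing the risk become weighted sums $\sum_i\lambda_i^{-1}(\cdot)^2$. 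This is what makes the eigenstructure of $M$ the natural object appearing in the statement.

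Next I would expand the restricted risk difference ${\cal R}_{\mathscr{G}^*_{1\cap2,n}}(\hat{\bf d}_{\rm m},{\bf d}^0)-{\cal R}_{\mathscr{G}^*_{1\cap2,n}}(\hat{\bf d}^{\rm MME},{\bf d}^0)$ into a squared-bias term and a variance term, using the moments from Lemma~\ref{lem:meancovariancedegree} and the same Taylor-expansion device employed for the univariate estimator (treating ${\bf d}^*$ as concentrated about $p{\bf d}^0$ so the nonlinear shrinkage factor may be linearized). Shrinkage injects a bias of order $\|{\bf d}^0\|^2$ while contracting the variance $\tfrac{1-p}{p}\mathrm{tr}(\mathscr{D}^0)$; collecting terms, the sign of the difference is governed by whether a quadratic form in $M^{-1}$ exceeds a threshold of the shape $1-\tfrac{(1-p)\lambda_{\min}(\mathscr{D}^0)}{\|{\bf d}^0\|^2}$, which is exactly the right-hand side appearing in the definition of $\mathscr{G}^*_{2,n}$.

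It then remains to certify that lower bound from sample quantities alone, and this is where the two classes enter. Restricting to $\mathscr{G}^*_{1,n}$ guarantees ${\bf 1}^T{\bf v}_i\ge\sqrt{n}\,\alpha_0$ for every eigenvector, so that ${\bf 1}^TM^{-1}{\bf 1}=\sum_i\lambda_i^{-1}({\bf 1}^T{\bf v}_i)^2\ge n\alpha_0^2\sum_i\lambda_i^{-1}$. An AM--HM (Cauchy--Schwarz) step bounds $\sum_i\lambda_i^{-1}\ge n^2/\mathrm{tr}(M)$, and de Caen's inequality $\sum_i(d^*_i)^2\le|E({\cal G})|\big(\tfrac{2|E({\cal G})|}{n-1}+n-2\big)$ together with $\mathrm{tr}(M)=\sum_i(d^*_i)^2+2|E({\cal G})|$ yields $\mathrm{tr}(M)\le|E({\cal G})|\big(\tfrac{2|E({\cal G})|}{n-1}+n\big)$. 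Chaining these gives ${\bf 1}^TM^{-1}{\bf 1}\ge n^3\alpha_0^2/\big[|E({\cal G})|(\tfrac{2|E({\cal G})|}{n-1}+n)\big]$, precisely the left-hand side of the $\mathscr{G}^*_{2,n}$ inequality. Membership in $\mathscr{G}^*_{2,n}$ then forces this certified lower bound above the bias--variance threshold, so the integrand of the risk difference is $\le0$ on $\mathscr{G}^*_{1\cap2,n}$ and the claimed inequality follows.

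The hard part is not any single inequality but the passage from the intractable exact risk of a nonlinear plug-in estimator to a clean comparison: I must control the random shrinkage factor (hence the Taylor/concentration approximation, legitimate exactly because $d^*_i$ concentrates around $p\,d^0_i$) and, more delicately, bridge the sample-side spectral quantities --- the eigenvectors and eigenvalues of $M$ and the sampled edge count $|E({\cal G})|$ --- to the population-side threshold built from ${\bf d}^0$, $\mathscr{D}^0$ and $\lambda_{\min}(\mathscr{D}^0)$. It is this mismatch between sampled and true network quantities that compels the conclusion to be stated as a restricted risk over the data-dependent event $\mathscr{G}^*_{1\cap2,n}$ rather than as an unconditional risk inequality.
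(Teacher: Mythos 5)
Your reconstruction of the scaffolding is essentially the intended one --- the scalar--shrinkage representation $\hat{\bf d}_{\rm m}=\bigl({\bf d}^{*T}M^{-1}{\bf d}^*\bigr)\hat{\bf d}^{\rm MME}$ with $M={\bf d}^*{\bf d}^{*T}+\mathscr{D}^*$, the MME risk $\tfrac{1-p}{p}\sum_i d^0_i$, the threshold $1-\tfrac{(1-p)\lambda_{\min}(\mathscr{D}^0)}{\|{\bf d}^0\|^2}$ emerging from the bias--variance comparison $(1-c)^2\|{\bf d}^0\|^2-(1-c^2)\tfrac{1-p}{p}\mathrm{tr}(\mathscr{D}^0)\le 0$, and the chain (eigenvector overlap) $\to$ (AM--HM) $\to$ (de Caen) producing exactly $n^3\alpha_0^2\big/\bigl[|E({\cal G})|\bigl(\tfrac{2|E({\cal G})|}{n-1}+n\bigr)\bigr]$. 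But there is a genuine gap at precisely the junction where the two halves of your argument must meet: the quantity you certify from below is ${\bf 1}^TM^{-1}{\bf 1}=\sum_i\lambda_i^{-1}({\bf 1}^T{\bf v}_i)^2$, whereas the quantity that must clear the threshold is the shrinkage factor $c={\bf d}^{*T}M^{-1}{\bf d}^*=\sum_i\lambda_i^{-1}({\bf d}^{*T}{\bf v}_i)^2$, since it is $c$, not ${\bf 1}^TM^{-1}{\bf 1}$, that multiplies the MME and hence governs the sign of the risk difference. These are different quadratic forms, and nothing in your write-up connects them. The bridge is exactly where the hypothesis $d^*_i\ge 1$ (no isolated nodes) has to be used --- ${\bf d}^*\ge{\bf 1}$ entrywise is what lets one argue $({\bf d}^{*T}{\bf v}_i)^2\ge({\bf 1}^T{\bf v}_i)^2\ge n\alpha_0^2$, so that $c\ge n\alpha_0^2\sum_i\lambda_i^{-1}$ and the rest of your chain applies to $c$ itself --- and it is telling that your proposal never invokes that hypothesis anywhere. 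Moreover this bridging step is delicate and deserves explicit justification: entrywise domination ${\bf d}^*\ge{\bf 1}$ does not in general imply $|{\bf d}^{*T}{\bf v}|\ge|{\bf 1}^T{\bf v}|$ for a vector ${\bf v}$ with mixed-sign entries, so one must either argue about the sign structure of the eigenvectors of $M$ or strengthen the inequality in some other way.

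A secondary, smaller point: once $c\ge 1-\tfrac{(1-p)\lambda_{\min}(\mathscr{D}^0)}{\|{\bf d}^0\|^2}$ is secured on $\mathscr{G}^*_{1\cap 2,n}$, the comparison you invoke is the one valid for a \emph{deterministic} shrinkage coefficient, where $\mathbb{E}[\hat{\bf u}^T{\bf d}^0]=\|{\bf d}^0\|^2$ and $\mathbb{E}\|\hat{\bf u}\|^2=\|{\bf d}^0\|^2+\tfrac{1-p}{p}\mathrm{tr}(\mathscr{D}^0)$ can be substituted; under the data-dependent restriction these moments do not factor through the indicator, and the pointwise loss difference $(c-1)\bigl[(c+1)\|\hat{\bf u}\|^2-2\hat{\bf u}^T{\bf d}^0\bigr]$ is not sign-definite even when $c$ is large. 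Your appeal to ``Taylor/concentration'' is the same level of informality the paper itself adopts (cf.\ its treatment of $\hat{d}_{i,{\rm u,P}}$), so I would not count it as the fatal flaw, but you should state explicitly that the conditional-on-the-event expectation is being approximated by the unconditional one; the genuinely missing idea is the ${\bf 1}$-versus-${\bf d}^*$ mismatch above.
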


Scrutiny of the conditions in Proposition \ref{multirisk}, along with definition of the set $\mathscr{G}^*_{1\cap 2, n}$, reveals a general characterization of the graphs where the proposed multivariate estimator performs better. It is to be noticed that ${\bf \hat{d}}_{\rm m}$ shrinks $\hat{\bf d}^{\rm MME}$ by some factor. The term on the right side of the inequality in the definition of $\mathscr{G}^*_{2,n}$ provides a lower bound on the shrinkage factor and the term on the left decreases as the cardinality of $E({\cal G})$ increases, i.e., the graph becomes less sparse. Hence, the proposed estimator can be expected to work better than the standard scale-up estimator under the assumption of sparsity of the sampled graph. This will also be demonstrated in the simulation section.

The eigenvector condition imposes a geometric constraint on the sample degree-degree matrix $\mathscr{D}^*$. What it essentially means is that the angle between the eigenvectors of $\left( {\bf d^*} {\bf d^{*^T}} + \mathscr{D}^*\right)$ and ${\bf 1}$ should be smaller than ${\rm arccos}(\alpha_0)$. Or, in other words, by selecting an $\alpha_0$ sufficiently small but positive, our class of sampled graphs are restricted where the associated matrix $\left( {\bf d^*} {\bf d^{*^T}} + \mathscr{D}^*\right)$ has eigenvectors at least ${\rm arcsin}(\alpha_0)$ angle away from any orthogonal direction to ${\bf 1}$. Thus, our estimator performs better for sparse graph satisfying a mild geometric condition.

\subsection{Risk of Bayes Estimate}
The performance of the Bayes estimators is evaluated here under several conditions and network paradigms. Note that these estimators are compared to the regular scale-up estimator with respect to their frequentist risk functions. We start with our estimator in its most general form and state conditions on the prior degree distribution that will ensure lower risk. From that, we assess its risk when the prior degree distribution is replaced with an appropriate estimate. We also explicitly derive the Bayes estimator for the Erd\"os-R\'enyi class of random graphs and state conditions under which the Bayes estimator yields lower risk than the scale-up estimator. 

\begin{prop}\label{suffrisk}
Let ${d}^0_i$ be the true degree of sample node $i$, and $d^*_i$, the observed degree. Denote by $\mathscr{G}^*_{\rm B}$ the class of sampled graphs where the following two conditions hold:
\begin{align}
&{\mathbb E}\left(\sum_{d_i \geq d^*_i} \pi^2(d_i)\right) \leq \frac{p(1-p)}{(N - 1 - d^0_i)^2} d^0_i \hspace{1cm} \text{ when } d^0_i \leq \frac{N-1}{2}\enskip  ; \, and \label{suffrisk:cond1}\\ 
&\left.\frac{\sum_{d_i \geq d^*_i} p\left(d^*_i, d_i\right) \pi(d_i)}{\sum_{d_i \geq d^*_i} p\left(d^*_i, d_i\right)}\right. \geq p \enskip , \label{suffrisk:cond2}
\end{align} 
where $p\left(d^*_i, d_i\right) =  \binom{d_i}{d^*_i} (1-p)^{d_i}$.  Then ${\cal R}_{\mathscr{G}^*_{\rm B}}\left(\hat{d}^B_i, d^0_i\right) \leq {\cal R}_{\mathscr{G}^*_{\rm B}}\left(\hat{d}^{\rm MME}_i, d^0_i\right)$ under induced subgraph sampling.
\end{prop}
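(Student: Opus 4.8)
The plan is to argue node by node and to bound the restricted risk of $\hat d^B_i$ by a bias--variance decomposition, calibrating the two defining conditions of $\mathscr{G}^*_{\rm B}$ against the scale-up benchmark. First I would record that, since $d^*_i \sim B(d^0_i,p)$, the estimator $\hat d^{\rm MME}_i = d^*_i/p$ is unbiased with ${\rm Var}(\hat d^{\rm MME}_i)=\frac{(1-p)d^0_i}{p}$, so ${\cal R}(\hat d^{\rm MME}_i,d^0_i)=\frac{(1-p)d^0_i}{p}$. I would then note a useful structural point: condition \eqref{suffrisk:cond1} is deterministic once $d^0_i$ is fixed, while condition \eqref{suffrisk:cond2} depends on the sample only through $d^*_i$. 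Hence the indicator ${\bf 1}(G^*\in\mathscr{G}^*_{\rm B})$ factors; if \eqref{suffrisk:cond1} fails both restricted risks are zero and the claim is trivial, so I may assume it holds and reduce the problem to comparing the two risks over the sample region carved out by \eqref{suffrisk:cond2}.

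Next I would write ${\cal R}(\hat d^B_i,d^0_i)={\rm Var}(\hat d^B_i)+({\rm E}\,\hat d^B_i-d^0_i)^2$ and treat the pieces separately, handling the variance with condition \eqref{suffrisk:cond1}. The key structural fact is that $\hat d^B_i$ is a posterior mean supported on degrees $d_i\ge d^*_i$, each at most $N-1$, so its fluctuations are confined to a window of width controlled by $N-1-d^0_i$; this is where the hypothesis $d^0_i\le\frac{N-1}{2}$ is used, to keep the reference degree in the lower half of the admissible range. A natural way to produce the factor $\sum_{d_i\ge d^*_i}\pi^2(d_i)$ is a Cauchy--Schwarz bound applied to the posterior weights $p(d^*_i,d_i)\pi(d_i)/\sum_{d_i\ge d^*_i}p(d^*_i,d_i)\pi(d_i)$; averaging over $d^*_i\sim B(d^0_i,p)$ then yields exactly $\mathbb E\big(\sum_{d_i\ge d^*_i}\pi^2(d_i)\big)$, multiplied by the range factor $(N-1-d^0_i)^2$. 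Condition \eqref{suffrisk:cond1} is calibrated so that this reads ${\rm Var}(\hat d^B_i)\le(N-1-d^0_i)^2\,\mathbb E\big(\sum_{d_i\ge d^*_i}\pi^2(d_i)\big)\le p(1-p)d^0_i$, which is strictly below $\frac{(1-p)d^0_i}{p}$ and hence leaves a positive budget of $\frac{(1-p)d^0_i}{p}-p(1-p)d^0_i$ for the squared bias.

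For the bias term I would invoke condition \eqref{suffrisk:cond2}. Writing the bias as the binomial average over $d^*_i$ of $\hat d^B_i(d^*_i)-d^0_i$, the inequality $\sum_{d_i\ge d^*_i}p(d^*_i,d_i)\pi(d_i)\ge p\sum_{d_i\ge d^*_i}p(d^*_i,d_i)$ limits how far the $\pi$-reweighting can pull the posterior mean below what the likelihood alone suggests, and so caps the squared bias; the goal is to show this cap fits inside the budget from the variance step. Combining the two estimates yields ${\cal R}(\hat d^B_i,d^0_i)\le\frac{(1-p)d^0_i}{p}={\cal R}(\hat d^{\rm MME}_i,d^0_i)$, and to reach the restricted conclusion I would carry the indicator ${\bf 1}(G^*\in\mathscr{G}^*_{\rm B})$ through the bias and variance estimates so that both sides are compared over the identical sample region.

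The hard part will be the variance bound for this \emph{ratio} estimator: unlike the linear estimators treated earlier, $\hat d^B_i$ is a nonlinear function of $d^*_i$ with no closed-form moments, so making the Cauchy--Schwarz step rigorous and extracting precisely the range factor $(N-1-d^0_i)^2$ and the sum $\sum_{d_i\ge d^*_i}\pi^2(d_i)$, so that \eqref{suffrisk:cond1} exactly closes the bound, is the technical core. Two further points need care: turning the weighted-average inequality \eqref{suffrisk:cond2} into a quantitative ceiling on the squared bias, and reconciling the fact that \eqref{suffrisk:cond1} is stated as an \emph{unrestricted} expectation with the need to compare the two risks on the region cut out by the indicator (together with the complementary regime $d^0_i>\frac{N-1}{2}$).
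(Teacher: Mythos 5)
Your proposal assembles the right raw ingredients --- the exact MME risk $(1-p)d^0_i/p$, the observation that $d^0_i \le \frac{N-1}{2}$ forces $|d_i - d^0_i| \le N-1-d^0_i$ for every candidate degree $d_i \in \{d^*_i,\dots,N-1\}$, and a Cauchy--Schwarz step that produces $\sum_{d_i\ge d^*_i}\pi^2(d_i)$ --- but the architecture you build from them does not close, because you split the work between the two conditions incorrectly. Condition \eqref{suffrisk:cond2} is not a cap on the bias: it is the lower bound $\sum_{d_i\ge d^*_i}p(d^*_i,d_i)\pi(d_i)\ge p\sum_{d_i\ge d^*_i}p(d^*_i,d_i)$ on the \emph{posterior normalizer}, and it is needed inside the very Cauchy--Schwarz step you assign to the variance. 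Concretely, writing $\hat d^B_i - d^0_i$ as a ratio with numerator $\sum_{d_i\ge d^*_i}(d_i-d^0_i)\,p(d^*_i,d_i)\pi(d_i)$ and applying Cauchy--Schwarz to that numerator gives
\begin{equation*}
\left(\hat d^B_i - d^0_i\right)^2 \;\le\; (N-1-d^0_i)^2\,
\frac{\left(\sum_{d_i\ge d^*_i}p^2(d^*_i,d_i)\right)\left(\sum_{d_i\ge d^*_i}\pi^2(d_i)\right)}{\left(\sum_{d_i\ge d^*_i}p(d^*_i,d_i)\pi(d_i)\right)^2}\,,
\end{equation*}
and the only way to eliminate the normalizer is \eqref{suffrisk:cond2} together with $\sum p^2 \le (\sum p)^2$, which yields $(\hat d^B_i-d^0_i)^2 \le (N-1-d^0_i)^2\sum_{d_i\ge d^*_i}\pi^2(d_i)/p^2$. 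Note the factor $1/p^2$, which your calibration omits: taking expectations and applying \eqref{suffrisk:cond1} to this bound gives exactly $(1-p)d^0_i/p$, i.e.\ the \emph{full} MME risk, not a strict fraction of it.

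This breaks your plan in two places. First, your claimed intermediate bound ${\rm Var}(\hat d^B_i)\le (N-1-d^0_i)^2\,\mathbb{E}\left(\sum_{d_i\ge d^*_i}\pi^2(d_i)\right)$ (with no $1/p^2$) has no derivation: Cauchy--Schwarz on the posterior weights cannot produce it without controlling the normalizer, and that control costs precisely the $1/p^2$ you drop; so the ``positive budget'' $\frac{(1-p)d^0_i}{p}-p(1-p)d^0_i$ you reserve for the squared bias is illusory. Second, once the correct factor is included, the pointwise bound above already controls the \emph{entire} squared error --- bias and variance together --- and uses up the whole budget, so no separate bias step is needed or possible; correspondingly, your reading of \eqref{suffrisk:cond2} as limiting how far the $\pi$-reweighting pulls the posterior mean has no quantitative mechanism behind it (a lower bound on the weighted average of prior probabilities says nothing direct about $|\mathbb{E}\hat d^B_i - d^0_i|$). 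The bias--variance decomposition is thus not merely harder than the direct argument; as calibrated, it cannot be completed. The one genuine issue you flag --- that this chain bounds the restricted Bayes risk by the \emph{unrestricted} MME risk $(1-p)d^0_i/p$ rather than by the restricted one --- is real, but it is a weakness of the proposition's statement/its intended proof, and the calibration of \eqref{suffrisk:cond1} makes clear that the unrestricted benchmark is what the argument is designed to hit.
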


The conditions (\ref{suffrisk:cond1}) and (\ref{suffrisk:cond2}) essentially constrain the tail behavior of the prior degree disbution. The first condition ensures that the tail decays at a rate such that it is not too ``thick'' and the second condition ensures that it is not too ``thin''. As $d^0_i$ becomes bigger, the RHS in condition (\ref{suffrisk:cond1}) becomes smaller and that is reminiscent of the sparsity property of the underlying graph, meaning that not a lot of nodes can have very high degree, an observation consistent with sparse graphs. On the other hand, the LHS in the condition (\ref{suffrisk:cond2}) can be interpreted as the mean of the tail probabilities weighted by the posterior distribution. This has to be bounded away from zero in order for the Bayes estimate to have lower risk than the MME.

In real problems, where the true degree distribution is unknown, one either has to choose $\pi$ subjectively or use the data to come up with a reasonable estimate. Estimating $\pi$ for a general case is beyond the scope of this paper and will not be discussed here. For our analysis, we will just assume that we have an estimate of the degree distribution at our disposal (e.g., \cite{Zhang2015}), denoted by $\hat{\pi}$. Using $\hat{\pi}$ will give us our proposed empirical Bayes estimate $\hat{d}^{\rm EB}_i$, the behavior of which can be described as follows.

\begin{prop}\label{pluginapprox}
Let $\hat{\pi}(\cdot)$ be an estimate of $\pi(\cdot)$ such that $\left\|\hat{\pi} - \pi\right\|_\infty < \epsilon.$ Then under assumption (\ref{suffrisk:cond2}), with $\pi$ replaced by $\hat{\pi}$, we have 
\begin{align}
&\left| \sum_{d_i \geq d^*_i}\binom{d_i}{d^*_i}(1-p)^{d_i}\hat{\pi}(d_i) - \sum_{d_i \geq d^*_i}\binom{d_i}{d^*_i}(1-p)^{d_i}\pi(d_i)\right| < \frac{ \epsilon (1-p)^{d^*_i}}{p^{d^*_i + 1}}  \label{pluginapprox:num} \\
&\left| \sum_{d_i \geq d^*_i}d_i\binom{d_i}{d^*_i}(1-p)^{d_i}\hat{\pi}(d_i) - \sum_{d_i \geq d^*_i}d_i\binom{d_i}{d^*_i}(1-p)^{d_i}\pi(d_i)\right| < \frac{ \epsilon (1-p)^{d^*_i}}{p^{d^*_i + 2}}(d^*_i + 1 - p)\label{pluginapprox:den}
\intertext{Thus, it follows that}
&\frac{\left|\hat{d}^{EB}_i - \hat{d}^B_i\right|}{\hat{d}^B_i} <  \frac{\epsilon (1-p)^{d^*_i}}{p^{d^*_i + 1}\sum_{d_i \geq d^*_i}d_i\binom{d_i}{d^*_i}(1-p)^{d_i}\pi(d_i)} + \frac{\epsilon (1-p)^{d^*_i}(d^*_i + 1 - p)}{p^{d^*_i + 2}\sum_{d_i \geq d^*_i}\binom{d_i}{d^*_i}(1-p)^{d_i}\pi(d_i)} \label{pluginapprox:ratio}
\end{align}
\end{prop}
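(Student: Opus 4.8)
The plan is to reduce everything to two elementary facts: the $\ell_\infty$ control $\|\hat\pi - \pi\|_\infty < \epsilon$ and two closed-form evaluations of binomial tail sums. Write the denominator and numerator of the Bayes estimate as $D = \sum_{d_i \ge d^*_i}\binom{d_i}{d^*_i}(1-p)^{d_i}\pi(d_i)$ and $N = \sum_{d_i \ge d^*_i}d_i\binom{d_i}{d^*_i}(1-p)^{d_i}\pi(d_i)$, with $\hat D,\hat N$ the analogues using $\hat\pi$, so that $\hat d^B_i = N/D$ and $\hat d^{EB}_i = \hat N/\hat D$. Bounds (\ref{pluginapprox:num}) and (\ref{pluginapprox:den}) are then exactly statements about $|\hat D - D|$ and $|\hat N - N|$, and (\ref{pluginapprox:ratio}) is assembled from them.

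First I would prove (\ref{pluginapprox:num}) and (\ref{pluginapprox:den}). Since $\hat D - D = \sum_{d_i \ge d^*_i}\binom{d_i}{d^*_i}(1-p)^{d_i}\bigl(\hat\pi(d_i) - \pi(d_i)\bigr)$, the triangle inequality together with $\|\hat\pi - \pi\|_\infty < \epsilon$ gives $|\hat D - D| < \epsilon \sum_{d_i \ge d^*_i}\binom{d_i}{d^*_i}(1-p)^{d_i}$, and similarly $|\hat N - N| < \epsilon \sum_{d_i \ge d^*_i}d_i\binom{d_i}{d^*_i}(1-p)^{d_i}$. The two remaining sums are evaluated from the generating-function identity $\sum_{m \ge k}\binom{m}{k}x^m = x^k/(1-x)^{k+1}$ valid for $|x|<1$ (here $x = 1-p$, $k = d^*_i$), which yields $(1-p)^{d^*_i}/p^{d^*_i+1}$ and hence (\ref{pluginapprox:num}) immediately. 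For the $d_i$-weighted sum I would apply the operator $x\,\frac{d}{dx}$ to that identity (equivalently, use $m\binom{m}{k} = (k+1)\binom{m}{k+1} + k\binom{m}{k}$), obtaining $\frac{(1-p)^{d^*_i}}{p^{d^*_i+2}}(d^*_i + 1 - p)$, which is exactly the right-hand side of (\ref{pluginapprox:den}).

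For (\ref{pluginapprox:ratio}) I would form the common-denominator difference $\hat d^{EB}_i - \hat d^B_i = (\hat N D - N\hat D)/(\hat D D)$, rewrite the numerator as $D(\hat N - N) - N(\hat D - D)$, apply the triangle inequality, and divide through by $\hat d^B_i = N/D$ to obtain a bound of the form $\tfrac{|\hat N - N|}{(\cdot)} + \tfrac{|\hat D - D|}{(\cdot)}$. Substituting the two bounds just established produces numerators matching the stated right-hand side. The denominators are controlled by two observations: the posterior mean satisfies $N/D = \mathbb{E}(d_i \mid d^*_i) \ge d^*_i$, so $N \ge d^*_i D \ge D$ when $d^*_i \ge 1$; and condition (\ref{suffrisk:cond2}), applied to the relevant distribution, lower-bounds $\hat D$ by $(1-p)^{d^*_i}/p^{d^*_i}$, keeping it away from zero. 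Because $N \ge D$ and the $N$-type error bound dominates the $D$-type bound (the former carries an extra factor $(d^*_i + 1 - p)/p \ge 1$), the ``crossed'' denominators $N$ and $D$ written in (\ref{pluginapprox:ratio}) majorize the tighter natural arrangement, so the stated inequality follows.

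I expect the denominator bookkeeping in this final step to be the main obstacle. The honest decomposition leaves $\hat D$ rather than $D$ in one denominator, so one must invoke condition (\ref{suffrisk:cond2}) to keep $\hat D$ bounded below and the relation $N \ge D$ to pass to the exact denominators appearing in (\ref{pluginapprox:ratio}). The two difference bounds are routine once the generating-function identities are in hand; it is the ratio bound where care is needed to ensure that replacing $\pi$ by $\hat\pi$ does not let the denominator degenerate, which is precisely the role played by assumption (\ref{suffrisk:cond2}).
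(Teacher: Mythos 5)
Your treatment of (\ref{pluginapprox:num}) and (\ref{pluginapprox:den}) is correct and complete: the triangle inequality under $\|\hat\pi-\pi\|_\infty<\epsilon$ plus the identities $\sum_{m\ge k}\binom{m}{k}x^m = x^k/(1-x)^{k+1}$ and $\sum_{m\ge k}m\binom{m}{k}x^m = x^k(k+x)/(1-x)^{k+2}$ at $x=1-p$ give exactly the stated right-hand sides; call them $B_D$ and $B_N$ respectively. The gap is in the passage to (\ref{pluginapprox:ratio}). In your notation (with $N,D$ the $\pi$-sums and $\hat N,\hat D$ the $\hat\pi$-sums), the honest decomposition gives
\begin{equation*}
\frac{\left|\hat d^{EB}_i - \hat d^{B}_i\right|}{\hat d^{B}_i} \;\le\; \frac{D}{\hat D}\left(\frac{|\hat N - N|}{N} + \frac{|\hat D - D|}{D}\right),
\end{equation*}
and your majorization step is fine: since $B_N \ge B_D$ and $N\ge D$ (for $d^*_i\ge 1$), one has $\frac{B_N}{N}+\frac{B_D}{D}\le \frac{B_D}{N}+\frac{B_N}{D}$, the stated bound. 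What your write-up never actually establishes is that the prefactor $D/\hat D$ is at most $1$. The three facts you invoke --- $N\ge D$, the dominance $B_N\ge B_D$, and the lower bound $\hat D\ge (1-p)^{d^*_i}/p^{d^*_i}$ from (\ref{suffrisk:cond2}) --- do not imply $D\le\hat D$: ``$\hat D$ bounded away from zero'' is not the property needed, because the slack in the majorization, $(B_N-B_D)(1/D-1/N)$, can be arbitrarily small (e.g.\ when $N/D$ is near $1$), so even a factor $D/\hat D$ slightly above $1$ would break the chain ``exact bound $\le$ uncrossed $\le$ crossed.''

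The missing ingredient is the complementary \emph{upper} bound on $D$, which pairs exactly with the lower bound you extracted from (\ref{suffrisk:cond2}). For any probability mass function $\pi$, each summand satisfies
\begin{equation*}
\binom{d_i}{d^*_i}(1-p)^{d_i}\pi(d_i) \;=\; \frac{(1-p)^{d^*_i}}{p^{d^*_i}}\left[\binom{d_i}{d^*_i}p^{d^*_i}(1-p)^{d_i-d^*_i}\right]\pi(d_i) \;\le\; \frac{(1-p)^{d^*_i}}{p^{d^*_i}}\,\pi(d_i),
\end{equation*}
since the bracketed quantity is a binomial probability; summing over $d_i\ge d^*_i$ gives $D \le (1-p)^{d^*_i}/p^{d^*_i}$. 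Combined with $\hat D \ge (1-p)^{d^*_i}/p^{d^*_i}$ from assumption (\ref{suffrisk:cond2}) applied to $\hat\pi$, this yields $D\le \hat D$, hence $D/\hat D\le 1$, and your argument then closes in one line: exact bound $\le \frac{B_N}{N}+\frac{B_D}{D} \le \frac{B_D}{N}+\frac{B_N}{D}$. So the defect is small but real --- the role of (\ref{suffrisk:cond2}) here is not to keep $\hat D$ away from zero, but to push $\hat D$ above the universal ceiling that $D$ can never exceed. (As a side remark, this same pair of inequalities shows how stringent (\ref{suffrisk:cond2}) is: a normalized $\hat\pi$ can never strictly satisfy it, so the hypothesis implicitly requires $\hat\pi$ to carry slightly more than unit mass in the relevant tail; this is a defect of the proposition's hypotheses rather than of your proof, but it is worth noticing.)
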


It is easily seen that with the assumption \eqref{suffrisk:cond2}, the upper bound in \eqref{pluginapprox:ratio} can be simplified to 
\begin{align*}
\frac{\left|\hat{d}^{EB}_i - \hat{d}^B_i\right|}{\hat{d}^B_i} &<  \frac{\epsilon (1-p)^{d^*_i}}{d^*_i p^{d^*_i + 2}\sum_{d_i \geq d^*_i}\binom{d_i}{d^*_i}(1-p)^{d_i}} + \frac{\epsilon (1-p)^{d^*_i}(d^*_i + 1 - p)}{p^{d^*_i + 3}\sum_{d_i \geq d^*_i}\binom{d_i}{d^*_i}(1-p)^{d_i}}.
\end{align*}
Assuming a large network, the sum in the denominator can be approximated by $\frac{(1-p)^{d^*_i}}{p^{d^*_i+1}}$. Then the upper bound is 
$$\frac{\epsilon}{d^*_i p} + \frac{\epsilon(d^*_i+1-p)}{p^2} = \frac{\epsilon}{p}\left(\frac{1}{d^*_i} + \frac{d^*_i + 1 - p}{p}\right).$$
From the above discussion, it is evident that if $\epsilon = o(p^2/n)$, $\hat{d}^{EB}_i \approx \hat{d}^B_i$ for all $i$ and hence their risk functions will also be close. Thus, using Proposition \ref{suffrisk}, it is expected that ${\cal R}_{\mathscr{G}^*_{\rm B}}\left(\hat{d}^{\rm EB}_i, d^0_i\right) \lesssim {\cal R}_{\mathscr{G}^*_{\rm B}}\left(\hat{d}^{\rm MME}_i, d^0_i\right)$

\subsubsection{Illustration: Erd\"os-R\'enyi Graphs}
It is well known that the asymptotic degrees in Erd\"os-R\'enyi graph models follow a Poisson distribution, under standard conditions. In this section, we study the effects of using a Poisson prior degree distribution for large Erd\"os-R\'enyi graphs. The goal is to demonstrate the efficacy of the Bayesian approach compared to scale-up estimates as in the last section. However, studying specific models like Erd\"os-R\'enyi will give us more insight about the performance of the proposed Bayes estimate. In this scenario, the prior $\pi(\cdot)$ is given by 
$$\pi(d_i) = e^{-\lambda} \frac{\lambda^{d_i}}{d_i !} \enskip ,$$
where $\lambda$ is the prior mean. For a large Erd\"os-R\'enyi graph with number of nodes $N$ and edge probability $p_e$, $\lambda \approx Np_e$. We denote, by $P(k, \mu)$, the shifted Poisson distribution on $k, k+1, \cdots, \infty$ whose p.m.f. is given by
$$f(x) = e^{-\mu}\frac{\mu^{x-k}}{(x-k)!}\mathbf{1}_{\{k, k+1, \cdots\}}(x).$$ It is easy to check that with a ${\rm Poisson}(\lambda)$ prior on $d_i$, the posterior distribution is $P\left(d^*_i, \lambda(1-p)\right)$. Hence the Bayes estimate with respect to the quadratic loss function is $$\hat{d}^{\rm B}_i = d^*_i + \lambda(1-p)\enskip .$$

\begin{prop}\label{ErdosBayesRisk}
Assuming
$$\lambda + \frac{1+p}{p}\left( \frac{1}{2} - \sqrt{\frac{\lambda p}{1+p} + 1}\right) \leq d^0_i \leq \lambda + \frac{1+p}{p}\left( \frac{1}{2} + \sqrt{\frac{\lambda p}{1+p} + 1}\right),$$
the quadratic risk of the Bayes estimator using a ${\rm Poisson}(\lambda)$ prior is smaller than that of the MME.
\end{prop}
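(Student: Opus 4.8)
The plan is to reduce the risk comparison to an elementary bias--variance computation and then to solving a single quadratic inequality in $d^0_i$. Since $\hat d^{\rm B}_i = d^*_i + \lambda(1-p)$ is an affine function of $d^*_i$, and $d^*_i \sim B(d^0_i,p)$ has mean $p d^0_i$ and variance $p(1-p)d^0_i$ (the univariate case of Lemma~\ref{lem:meancovariancedegree}), I would first write its frequentist risk as squared bias plus variance. The bias is $\mathbb{E}[\hat d^{\rm B}_i] - d^0_i = p d^0_i + \lambda(1-p) - d^0_i = (1-p)(\lambda - d^0_i)$, and the variance equals $\mathrm{Var}(d^*_i) = p(1-p)d^0_i$, so that
\[
{\cal R}(\hat d^{\rm B}_i, d^0_i) = (1-p)^2(\lambda - d^0_i)^2 + p(1-p)d^0_i .
\]

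Next I would record the risk of the MME. Because $\hat d^{\rm MME}_i = d^*_i/p$ is unbiased with variance $p(1-p)d^0_i/p^2 = (1-p)d^0_i/p$, its risk is simply ${\cal R}(\hat d^{\rm MME}_i, d^0_i) = (1-p)d^0_i/p$. With both risks now in closed form, the desired inequality ${\cal R}(\hat d^{\rm B}_i, d^0_i) \le {\cal R}(\hat d^{\rm MME}_i, d^0_i)$ reads $(1-p)^2(\lambda - d^0_i)^2 + p(1-p)d^0_i \le (1-p)d^0_i/p$. Dividing through by the positive factor $1-p$ and collecting terms, this collapses to the clean condition $(\lambda - d^0_i)^2 \le \tfrac{1+p}{p}\, d^0_i$.

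I would then treat the last display as a convex quadratic in the variable $d^0_i$: it is strictly positive at $d^0_i=0$ (equal to $\lambda^2$) and tends to $+\infty$, so its solution set is exactly the closed interval between its two roots, both of which are positive since their product is $\lambda^2>0$ and their sum is $2\lambda + \tfrac{1+p}{p}>0$. Applying the quadratic formula, with discriminant $\tfrac{1+p}{p}\bigl(4\lambda + \tfrac{1+p}{p}\bigr)$, produces the two endpoints, and rearranging these into the advertised symmetric form $\lambda + \tfrac{1+p}{p}\bigl(\tfrac12 \pm (\,\cdot\,)\bigr)$ gives the stated interval.

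The computation presents no substantial obstacle; the one step demanding genuine care is the final algebraic bookkeeping, namely massaging the raw root expression from the quadratic formula into the stated closed form and confirming that the discriminant is strictly positive (which holds for every $\lambda\ge 0$ and $0<p<1$, so the interval is always nonempty). As a safeguard I would substitute each endpoint back into $(\lambda - d^0_i)^2 = \tfrac{1+p}{p}\,d^0_i$ and verify equality directly, to make sure the constant appearing inside the square root in the endpoint formula has been transcribed correctly.
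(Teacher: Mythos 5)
Your reduction is the natural one and is surely the route the paper itself takes: bias--variance decomposition of $\hat d^{\rm B}_i = d^*_i + \lambda(1-p)$ giving risk $(1-p)^2(\lambda - d^0_i)^2 + p(1-p)d^0_i$, the MME risk $(1-p)d^0_i/p$, and cancellation of $(1-p)$ to arrive at the quadratic condition $(\lambda - d^0_i)^2 \le \frac{1+p}{p}\,d^0_i$. All of that is correct.

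The gap is in your final step, where you assert that the roots of $x^2 - \left(2\lambda + \frac{1+p}{p}\right)x + \lambda^2 = 0$ ``rearrange into the advertised symmetric form.'' They do not. Writing $c = \frac{1+p}{p}$, the discriminant is $c(4\lambda + c)$ (as you say), but the roots are
\[
\lambda + \frac{c}{2} \pm \frac{1}{2}\sqrt{c(4\lambda + c)} \;=\; \lambda + c\left(\frac{1}{2} \pm \sqrt{\frac{\lambda p}{1+p} + \frac{1}{4}}\right),
\]
i.e., the constant under the radical is $\frac14$, not $1$ as in the stated proposition: halving $\sqrt{c^2 + 4\lambda c}$ gives $\sqrt{\lambda c + c^2/4}$, not $\sqrt{\lambda c + c^2}$. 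The printed interval is therefore strictly wider than the exact solution set of the quadratic inequality, and no rearrangement can close that gap --- indeed the conclusion genuinely fails on the excess region. Concretely, take $\lambda = 1$, $p = \frac12$, $d^0_i = 5.5$, which lies inside the stated interval (upper endpoint $\approx 5.96$) but outside the exact one (upper endpoint $\approx 4.79$): the Bayes risk is $0.25(4.5)^2 + 0.25(5.5) = 6.4375$, exceeding the MME risk of $5.5$. The safeguard you yourself proposed --- substituting the stated endpoints back into $(\lambda - d^0_i)^2 = \frac{1+p}{p}d^0_i$ --- would have exposed exactly this: equality does not hold there. The honest output of your (correct) computation is the proposition with $\frac14$ in place of $1$ under the square root; the version with $1$ appears to be an algebraic slip, and a careful writeup should prove the corrected interval rather than claim agreement with the printed one.
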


The above result shows that if the sampled node is such that its true degree belongs to a neighborhood around the mean of the underlying degree distribution, then the Bayes estimator is uniformly better than the MME. In case the underlying mean is unknown, it can easily be estimated from the sample. (e.g., for known $N$, $\hat{\lambda}_e = N\hat{p_e} = N |E(G^*)| / \binom{n}{2}.$) If $\hat{\lambda}$  is a consistent estimator of $\lambda$ in the sense that $\hat{\lambda} \stackrel{P}{\rightarrow} \lambda$ when $N \rightarrow \infty$, $n\rightarrow\infty$ and $n/N \rightarrow p$, then the empirical Bayes estimator $$\hat{d}^{\rm EB}_i = d^*_i + \hat{\lambda}(1-p)$$ will converge in probability to the Bayes estimator in the sense that $\left|\hat{d}^{\rm EB}_i - \hat{d}^{\rm B}_i\right| \stackrel{P}{\rightarrow} 0$. Hence, the result of Prop. (\ref{ErdosBayesRisk}) is expected to hold. This will also be demonstrated in the simulations.

\section{Simulations}
For our simulation study, we look at two different regimes of network -- Erd\"os-R\'enyi random graphs and heavy tailed degree distributions. 
\subsection{Erd\"os-R\'enyi network}
We compare four methods of estimation - the regular MME, univariate risk minimizer, multivariate risk minimizer and the Bayes estimate. As priors in Bayes estimation, we use both exponentially decaying (Poisson) and polynomially decaying degree distribution as priors. Table \ref{ERtable} records the Euclidean distance between the true and estimated degree vectors across some combinations of graph size $N$, edge strength $p_e$ and sampling proportion $p$. The errors are averaged over 50 different samples from each given graph $G$. From the output, it is clear that the Bayes estimators with true $\lambda$ and estimated $\lambda$ outperform other estimators by a very wide margin in terms of $\ell_2$ risk. Also, our theoretical prediction in the discussion following Proposition \ref{multirisk} was that the multivariate risk minimizer (MRM) works better than the MME for sparse graphs. This is experimentally verified in this simulation, since we see that the relative risk of MRM compared to MME decreases as the sparsity of the underlying graph increases, i.e., as $p_e$ decreases. The method with lowest total quadratic loss is shown in red for each condition.

\subsection{Scale Free Network}
We compared four methods of estimation in simulated scale free networks which follow a power law degree distribution. As priors in Bayes estimation, we compared the true polynomial prior and quadratic prior. We computed the $l_2$ distances across some combinations of sparsity (denoted by $s$, given by the ratio of total edges to all possible edges), sampling proportion $p$ and heaviness of the tail of the degree distribution, controled by $m$. The results are shown in Table \ref{SFtable}. The Bayes estimators or the multivariate risk minimizers work better than the other estimators. One important thing to observe here is that for the most sparse graph, the Bayes estimator with true prior works the best and as $s$ increases, multivariate risk minimizers work better than the rest, but there is hardly any improvement over MME. Again, the method with lowest total quadratic loss is shown in red for each condition.

\begin{table}[h]
\centering
\begin{minipage}[t]{0.49\hsize}\centering
\resizebox{\columnwidth}{!}{%
\begin{tabular}{|c|c|c|c|c|c|c|c|c|c|c|c|}
\toprule
$p_e, p\downarrow, N\rightarrow$ & \multicolumn{6}{|c|}{$N = 1000$}\\
\midrule
 & MME & URM & MRM & \multicolumn{3}{|c|}{Bayes}\\ \hline
 & & & & Pois.($\lambda$) & Pois.($\hat{\lambda}$)& Poly.\\ \hline
$p_e = 0.1, p = 0.1$ & 292.29 & 290.04 & 289.76 & \textcolor{red}{90.03} &  95.95 & 292.48\\ \hline
$p_e = 0.2, p = 0.1$ & 416.02 & 415.15 & 413.28 & \textcolor{red}{121.32} & 128.49 & 416.02\\ \hline
$p_e = 0.3, p = 0.1$ & 492.22 & 491.88 & 488.05 & \textcolor{red}{136.86} & 149.02 & 492.64\\ \hline
$p_e = 0.4, p = 0.1$ & 588.18 & 587.84 & 586.40 & \textcolor{red}{152.94} &  168.99 & 588.02\\ \hline\hline
$p_e = 0.1, p = 0.2$ & 284.08 & 283.67 & 282.76 & \textcolor{red}{119.87} &  122.73 & 284.24\\ \hline
$p_e = 0.2, p = 0.2$ & 389.15 & 389.07 & 386.87 & \textcolor{red}{164.30} &  166.84 & 389.55\\ \hline
$p_e = 0.3, p = 0.2$ & 485.09 & 485.07 & 481.82 & \textcolor{red}{187.43} &  190.55 & 485.63\\ \hline
$p_e = 0.4, p = 0.2$ & 527.37 & 527.28 & 527.68 & \textcolor{red}{205.47} &  210.42 & 527.07\\ \hline
\bottomrule
\end{tabular}
}
\caption{Erd\"os-R\'enyi Simulation Results: $\lambda$ is the true mean using known $p_e$. $\hat{\lambda}$ is the estimated mean using an estimate $\hat{p_e}$ of $p_e$.}\label{ERtable}
\label{table:sim1}
\end{minipage}
\hfill
\begin{minipage}[t]{0.49\hsize}\centering
\resizebox{\columnwidth}{!}{%
\begin{tabular}{|c|c|c|c|c|c|}
\toprule
$s, p\downarrow, N\rightarrow$ & \multicolumn{5}{|c|}{$N = 1000$}\\
\midrule
 & MME & URM & MRM & \multicolumn{2}{|c|}{Bayes}\\ \hline
 & & & & True Prior & Quad. Prior\\ \hline
$s = 0.2\%, p = 0.1, m = 2$ & 45.60 & 35.76 & 43.78 & \textcolor{red}{33.21} & \textcolor{red}{33.21}\\ \hline
$s = 1\%, p = 0.1, m = 2$ & 92.13 & 85.39 & 89.93 & \textcolor{red}{82.29} & \textcolor{red}{82.29}\\ \hline
$s = 5\%, p = 0.1, m = 2$ & 238.10 & 234.28 & 237.27 & \textcolor{red}{232.76} & \textcolor{red}{232.76}\\ \hline
$s = 0.2\%, p = 0.1, m = 2.5$ & 42.48 & 28.26 & 40.27 & \textcolor{red}{19.23} & 21.07\\ \hline
$s = 1\%, p = 0.1, m = 2.5$ & 92.91 & 82.89 & 91.50 & 81.93 & \textcolor{red}{78.72}\\ \hline
$s =5\%,  p = 0.1, m = 2.5$ & 210.04 & 214.70 & \textcolor{red}{208.22} & 231.68 & 219.55\\ \hline
$s = 0.2\%, p = 0.1, m = 3$ & 41.52 & 28.75 & 39.36 & \textcolor{red}{21.71} & 22.61\\ \hline
$s = 1\%, p = 0.1, m = 3$ & 89.40 & 79.98 & 88.07 & 83.39 & \textcolor{red}{75.46}\\ \hline
$s = 5\%, p = 0.1, m = 3$ & 209.97 & 213.30 & \textcolor{red}{208.25} & 242.90 & 217.87\\ \hline
\bottomrule
\end{tabular}
}
\caption{Scale Free Simulation Results}\label{SFtable}
\label{table:sim2}
\end{minipage}
\end{table}

\section{Human Trafficking Network}
In February 2015, the Defense Advanced Research Projects Agency (DARPA), an agency of the U.S. Department of Defense, announced the \emph{Memex} program in response to the use of the Internet in human trafficking, especially chat forums, advertisements and job services sections. DARPA-funded research determined the trafficking industry spent \$250M to post more than 60M advertisements over a two-year time frame\cite{DARPA}. Indexing and cross-referencing the ads with the same contact number, similar address or zip codes help identify and track the illegal trafficking activities. This leads to a massive background network structure where each node represents an advertisement and an edge between two nodes are created if they share certain features. It is not unreasonable to expect that, in surveillance of networks like this, sampling may well arise, either by choice or by circumstance.  We mimic this situation by pretending that this underlying network generated by the \emph{Memex} program is unknown to us and sampling it using induced subgraph sampling. The nodes associated with trafficking activities are flagged in the data. There are 31,248 nodes, of which 12,387 are flagged and there are 10,200,838 edges. Our goal was to estimate the true degrees of flagged nodes that we saw in our sample. We compared the $\ell_2$ distance of regular scale-up estimators, and our proposed univariate, multivariate and Bayes estimators. For the Bayes estimator, a number of polynomial priors were taken into consideration with varying degree of decay, denoted by $\alpha$. The results are shown in Table \ref{humantraffic}. Almost everything works better than the naive scale-up estimator in terms of total $\ell_2$ loss, although the relative improvement is more modest than in simulation.

\begin{table}[h]
\centering
\resizebox{0.75\columnwidth}{!}{%
\begin{tabular}{|c|c|c|c|c|c|c|c|}
\toprule
$p$ & MME & URM & MRM & \multicolumn{3}{|c|}{Bayes}\\ \hline
 & & & & $\alpha = -0.1$ & $\alpha = -0.5$ & $\alpha = -1$\\ \hline
$p = 0.005$ & 3451.364 & \textcolor{red}{3436.64} & 3447.24 & 3687.26 & 3541.94 & 3450.97\\ \hline
$p = 0.01$ & 3427.55 & \textcolor{red}{3397.71} & 3427.88 & 3451.86 & 3412.12 & 3428.59\\ \hline
$p = 0.02$ & 4462.937 & \textcolor{red}{4448.33} & 4461.64 & 4492.83 & 4450.71 & 4462.31\\ \hline
\bottomrule
\end{tabular}
}
\caption{Sampling from Human Trafficking Network}\label{HTtable}
\label{humantraffic}

\end{table}

\section{Discussion \& Future Research}
In this paper, we addressed the problem of estimation of true degrees of sampled nodes from an unknown graph. We proposed a class of estimators from a risk-theory perspective where the goal was to minimize the overall $\ell_2$ risk of the degree estimates for the sampled nodes. We considered estimators that minimize both frequentist and Bayes risk functions and compared the frequentist $\ell_2$ risks of our proposed estimator to the naive scale-up estimator. The basic objective of proposing these estimators was to exploit the additional network information inherent in the sampled graph, beyond the observed degrees. Our theoretical analyses, simulation studies and real data show clear evidence of superior performance of our estimators compared to MME, especially when the graph is sparse and the sampling ratio is low, mimicking the real-world examples.

There are a number of ways our current work could be extended. Firstly, a theoretical analysis of the Bayes estimators under priors for random graph models beyond Erd\"os-R\'enyi is desirable, although likely more involved.  Secondly, although induced subgraph sampling serves as a representative structural model for a certain class of adaptive sampling designs, the specific details of the sufficiency conditions discussed in this paper can be expected to vary slightly with the other sampling designs (e.g., incident subgraph or random walk designs) .  Finally, the success of the Bayesian method appears to rely heavily upon appropriate choice of prior distribution, as observed in our theoretical analysis and computational experiments. It would be of interest to explore the performance of the empirical Bayes estimate in conjunction with the nonparametric method of degree distribution estimation proposed in~\cite{Zhang2015}.  More generally, the method in~\cite{Zhang2015} can in principle be extended to estimate individual vertex degrees.  But the computational challenge of implementation and the corresponding risk analysis can be expected to be nontrivial.

\newpage
\includepdf[pages=-, pagecommand={}]{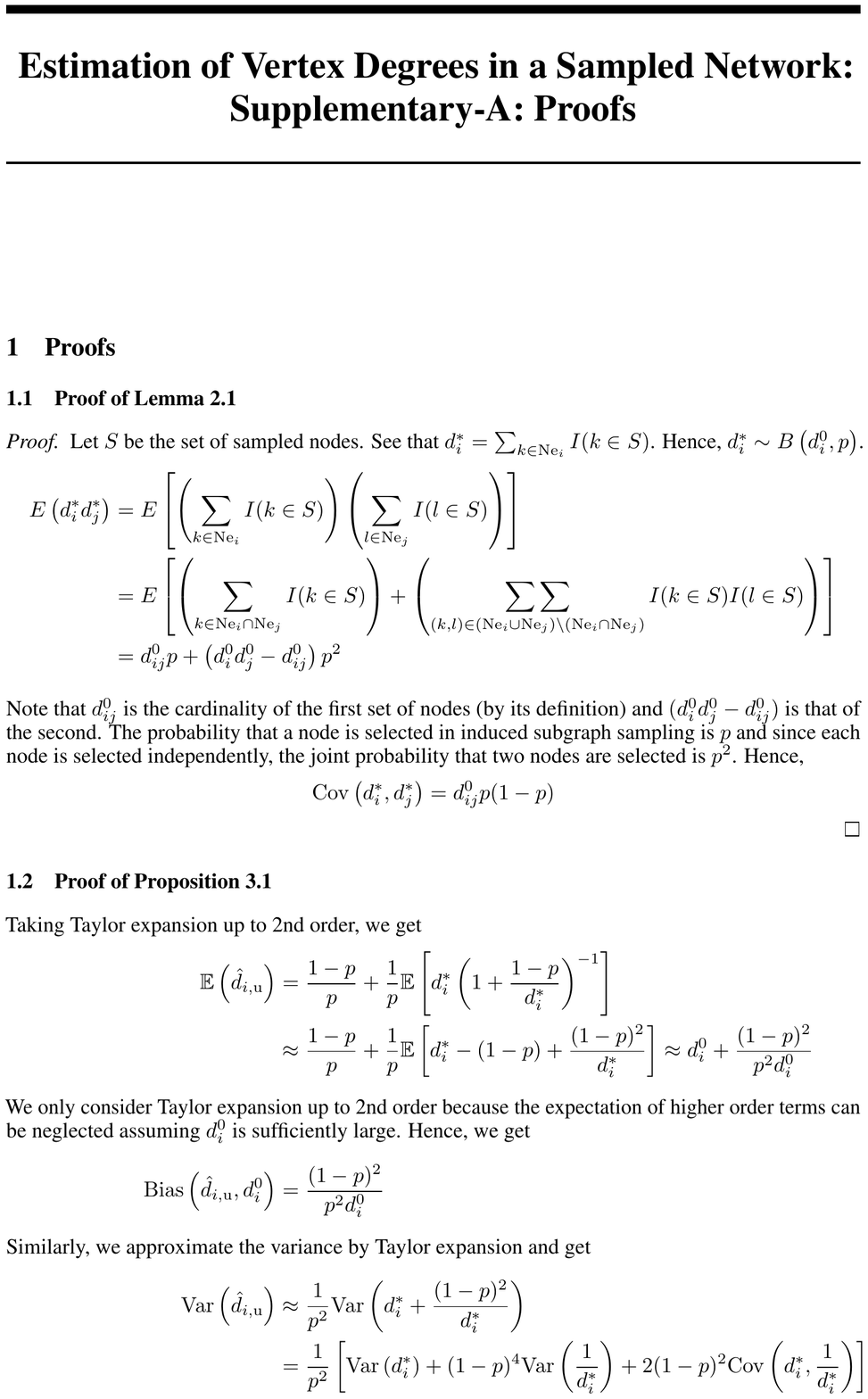}

\newpage
%{\small
%\bibliographystyle{plain}
%\bibliography{bibfile.bib}
%}

\end{document}